\documentclass[12pt,technote, onecolumn, draft]{IEEEtran}
\usepackage[utf8]{inputenc}
\usepackage{latexsym}
\usepackage{mathrsfs}
\usepackage{graphicx}
\usepackage{multirow}
\usepackage{amsfonts,amssymb,amsmath,amsthm,bm}
\usepackage{color}
\usepackage{booktabs}
\usepackage{cite}
\newtheorem{theorem}{Theorem}[section] 
\newtheorem{definition}[theorem]{Definition} 
\newtheorem{lemma}[theorem]{Lemma} 
\newtheorem{corollary}[theorem]{Corollary}
\newtheorem{example}[theorem]{Example}
\newtheorem{proposition}[theorem]{Proposition}
\newtheorem{remark}[theorem]{Remark}

\begin{document}
	\title{Hyper-derivative Algebraic Geometry Codes via Local Expansions}
	\author{Xiaofeng Liu, Hengfeng Liu, Jun Zhang, Fang-Wei Fu, Chunming Tang
		\IEEEcompsocitemizethanks{\IEEEcompsocthanksitem Xiaofeng Liu and Fang-Wei Fu are with the Chern Institute of Mathematics and LPMC, Nankai University, Tianjin 300071, China. Jun Zhang is with the School of Mathematical Sciences, Capital Normal University, Beijing 100048, China. Hengfeng Liu is with the School of Mathematics, Southwest Jiaotong University, Chengdu 610031, China, and Chunming Tang is with the School of Information Science and Technology, Southwest Jiaotong University, Chengdu 610031, China.      
			Emails: lxfhah@mail.nankai.edu.cn, hengfengliu@163.com, junzhang@cnu.edu.cn, fwfu@nankai.edu.cn, tangchunmingmath@163.com.
		}
		\thanks{Xiaofeng Liu and Fang-Wei Fu were supported by the National Key Research and Development Program of China (Grant Nos.  2022YFA1005000), the National Natural Science Foundation of China (Grant Nos. 12141108, 61971243), the Fundamental Research Funds for the Central Universities of China (Nankai University), and the Nankai Zhide Foundation.
			Jun Zhang was supported by the National Natural Science Foundation of China under Grant Nos. 12222113, 12441105.        
			The research of Hengfeng Liu and Chunming Tang was supported by the National Natural Science Foundation of China under grant No. 12231015, and by the Science and Technology Projects of Xizang Autonomous Region, China, under Grant XZ202502JD0036.}
	}{\tiny }
	\maketitle	
	
	\begin{abstract}
		Hyper-derivative Reed–Solomon (HRS) codes are a generalization of Reed–Solomon codes under the Niederreiter–Rosenbloom–Tsfasman (NRT) metric. In this paper, we develop a systematic construction framework of hyper-derivative algebraic geometry (HAG) codes via local expansions, extending HRS codes from the rational function field to general algebraic function fields. Using the residue theorem, we determine their Euclidean duals and illustrate that the duals naturally reverse. We further give criteria for reverse self-orthogonality and reverse self-duality in terms of two classes of bilinear forms. Finally, we provide an asymptotic bound on the rate and relative NRT distance via function field towers and Ihara’s constant.
	\end{abstract}

	\begin{IEEEkeywords}
		HAG code, HRS code, NRT metric, function field, local expansion, residue theorem
	\end{IEEEkeywords}
	
	\section{Introduction}
	\label{sec:1}
	Niederreiter-Rosenbloom-Tsfasman (NRT) metric was initially introduced in 1987 to study uniform distribution of point sets in Euclidean spaces.  NRT Reed-Solomon (NRT RS) codes introduced by Rosenbloom and Tsfasman serve as a class of original maximally distance separable (MDS) codes.
	Recently, Can and Horowitz introduced Hyper-derivative Reed-Solomon (HRS) codes; see \cite{11}. HRS codes form an important class of MDS codes under the NRT metric, which can be seen as a generalization of the NRT RS codes. In that paper, Can and Horowitz also studied the duality and low density property, \textit{etc.} for HRS codes. They stated that the Euclidean dual of an HRS code is still an HRS code. However, such a property does not hold for general HRS codes. More recently, the actual duality structure and reverse self-dual constructions of HRS codes were established in \cite{99}. 

   Function fields with many rational places allow us to design codes with lengths beyond RS codes.
    Over the past decades, function fields have also found applications in several scenarios of coding theory.  Previously, NRT RS codes were generalized to function fields by Tsfasman \textit{et. al.}; see chapter 12 in \cite{8}. Such codes are called NRT algebraic geometry (NRT AG) codes. It is therefore natural to explore whether the duality structure and some fundamental properties of HRS codes can be extended to general function fields.  Recently, Can \textit{et. al.} extended the evaluation codes under bottleneck metrics to AG codes in \cite{50} and studied some properties for such codes. This leads us to study the properties of hyper-derivative algebraic geometry (HAG) codes systematically.

    Local expansion serves as a main technical tool in our construction of HAG codes and analysis of their duality. It has been widely applied in coding theory and cryptography, including the constructions of digital nets \cite{5,6}, algebraic geometry codes \cite{7}, and sequences with almost perfect linear complexity profiles \cite{10}, as well as distributed matrix multiplication \cite{4}.

	This paper is organized as follows. In Section \ref{sec:2}, we will provide some fundamental results of function fields over finite fields, local expansions, Ihara's constant and Garcia-Stichtenoth towers. In Section \ref{sec:6}, we introduce the main results and detailed comparison. In Section \ref{sec:3}, we introduce two constructions of HAG codes via local expansions of functions and differentials. Then we determine the duality and reverse dual codes. By carefully choosing a special type of differentials, we also provide a normalized construction of HAG codes. In Section \ref{sec:4}, we obtained some asymptotic results by using some function field towers. Finally, we make a conclusion and list some possible future work in Section \ref{sec:5}.
	
	\section{Preliminaries }
	\label{sec:2}
	In this section, we mainly recall some fundamental results of global function fields, local expansions, Garcia-Stichtenoth tower, Ihara's constant and NRT metric.
	\subsection{Function Fields over Finite Fields}
	Let $F/\mathbb{F}_{q}$ be a function field over finite field $\mathbb{F}_{q}$ with genus $\mathfrak{g}_{F}$. Denote by $\mathbb{P}_{F}$ and $\mathbb{D}_{F}$ the set of all places and divisors of $F$.    Any discrete valuation $\mathrm{v}_{P}$ corresponds to some place $P\in\mathbb{P}_{F}$. For any nonzero divisor $G=\sum_{P\in\mathbb{P}_{F}}\mathrm{v}_{P}(G)P$, the support of $G$ is defined by $\mathrm{Supp}(G)=\{P\in\mathbb{P}_{F}| \mathrm{v}_{P}(G)\neq 0\}$. For any nonzero function $f\in F$, we denote by $(f),(f)_{0},(f)_{\infty}$ the principal divisor, zero divisor and pole divisor of $f$, respectively.  We also have
	$(f)=(f)_{0}-(f)_{\infty}$ with the decomposition $(f)_{0}=\sum_{P\in\mathbb{P}_{F},\mathrm{v}_{P}(f)>0}\mathrm{v}_{P}(f)P$ and $
	(f)_{\infty}=\sum_{P\in\mathbb{P}_{F},\mathrm{v}_{P}(f)<0}(-\mathrm{v}_{P}(f))P$. The set of principal divisors forms a group under addition and is denoted by $\mathrm{Princ}(F)=\{(f): f\in F\}$.
    We call two divisors $A$ and $B$ in $\mathbb{D}_{F}$ equivalent if there exists $z\in F$ such that $A = B + (z)$ and we denote by $A\sim B$.  The factor group $\mathrm{Pic}(F) = \mathbb{D}_{F}/\mathrm{Princ}(F)$ is called the Picard group of $F/\mathbb{F}_{q}$.
	
	For any non-negative divisor $G\in\mathbb{D}_{F}$, the Riemann-Roch space $\mathcal{L}(G)$ is defined as
	\begin{displaymath}
		\mathcal{L}(G):=\{f\in F\setminus\{0\}: (f)\geq -G\}\cup\{0\},
	\end{displaymath}
	which is a finite $\mathbb{F}_{q}-$dimensional vector space. We denote by $\ell(G)$ the dimension of $\mathcal{L}(G)$. 
	
	Denote by $\Omega_{F}$ the space consisting of all differentials of $F$. Then we have the differential space $\Omega_{F}(G)=\{\omega\in\Omega_{F}: (\omega)\geq G\}\cup\{0\}$ which is also a finite $\mathbb{F}_{q}-$dimensional vector space. We denote it by $i(G)=\dim_{\mathbb{F}_{q}}\Omega_{F}(G)$. By the Riemann-Roch theorem, we have $\ell(G)-i(G) =\deg(G)+1-\mathfrak{g}_{F}$.
	
	\subsection{Local Expansions}
	Let $\lambda$ be a local uniformizer of a place $P\in\mathbb{P}_{F}$, \textit{i. e.}, $\mathrm{v}_{P}(\lambda)=1$. For any function $f\in F^*$ with $\mathrm{v}_{P}(f)\geq t$, we have $\mathrm{v}_{P}(f/\lambda^{t})\geq 0$. Put $a_{t}=f/\lambda^{t}(P)$.  Then $f/\lambda^{t}-a_{t}$ satisfies $\mathrm{v}_{P}(f/\lambda^{t}-a_{t})\geq 1$ and $\mathrm{v}_{P}((f-a_{t}\lambda^{t})/\lambda^{t+1})\geq 0$. Denote by $a_{t+1}=((f-a_{t}\lambda^{t})/\lambda^{t+1})(P)$. By such an induction process, we shall obtain a sequence $\{a_{r}\}_{r=t}^{\infty}$ that is equivalent to the formal expansion of $f$, \textit{that is}, $f=\sum^{\infty}_{i=t}a_{i}\lambda^{i}$, and we call it a local expansion of $f$ at $P$.
	
	\subsection{Ihara's Constant $A(q)$}
	Let $\mathfrak{g}$ be a positive integer and we denote by $N_{q}(\mathfrak{g})$ the maximum number of rational places of any function field with constant field $\mathbb{F}_{q}$ and genus $\mathfrak{g}$. Define the real number 
	\begin{displaymath}
		A(q)=\lim\sup_{\mathfrak{g}\to\infty}\frac{N_{q}(\mathfrak{g})}{\mathfrak{g}}
	\end{displaymath}
	and it is called Ihara's constant. 
 The Drinfeld-Vl{\u{a}}du{\c{t}} in \cite{8}  holds
	\begin{displaymath}
		A(q)\leq\sqrt{q}-1.
	\end{displaymath}
	In particular, if $q$ is a square power of a prime, the equality holds via modular curves; see \cite{17},
	\begin{displaymath}
		A(q)=\sqrt{q}-1.
	\end{displaymath}

	If $q=p^{2m+1}$ is an odd power of a prime, Bassa \textit{et. al.} in \cite{18} obtained the following lower bound and also provided an explicit construction to attain such bound,
	\begin{displaymath}
		A(p^{2m+1})\geq\frac{2(p^{m+1}-1)}{p+1+\zeta},\ \zeta=\frac{p-1}{p^{m}-1}.
	\end{displaymath}
	If $\ell$ is a prime, $\ell\mid (p-1)$, and $p>4\ell+1$, then the following bound holds; see \cite{8}.
	\begin{displaymath}
		A(p^{\ell})\geq\frac{\sqrt{\ell(p-1)}-2\ell}{\ell-1}.
	\end{displaymath}

	\subsection{Garcia-Stichtenoth Tower}
	Let $q$ be a square of a prime power. The Garcia-Stichtenoth tower $\mathcal{F}=(F_{i}/\mathbb{F}_{q})_{i\geq 1}$ with $F_{1}=\mathbb{F}_{q}(y_{1})$ and the iterative process $F_{i}=F_{i-1}(y_{i})$ is defined by
	\begin{displaymath}
		y_{i}^{\sqrt{q}}+y_{i}=\frac{y^{ \sqrt{q} }_{i-1}}{y^{  \sqrt{q}  -1}_{i-1}+1}
	\end{displaymath}
	for $i\geq 2$. The number of rational points satisfies $N(F_{i})\geq (q- \sqrt{q}   ) \sqrt{q}  ^{i-1}+ \sqrt{q}  $ and the genus of $F_{i}$ is divided into the two cases
			\begin{displaymath}
				\mathfrak{g}_{F_{i}}=\begin{cases}
					(  \sqrt{q} ^{\frac{i}{2}}-1)^{2},&\text{if}\ i\equiv 0\ (\text{mod}\ 2),\\
					(  \sqrt{q}  ^{\frac{i+1}{2}}-1)( \sqrt{q}   ^{\frac{i-1}{2}}-1),&\text{if}\ i\equiv 1\ (\text{mod}\ 2).
				\end{cases}
			\end{displaymath}
	The asymptotic parameter can be given by $\lim_{i\to\infty}\frac{N(F_{i})}{\mathfrak{g}_{T_{i}}}=\sqrt{q}-1$ which means $\mathcal{F}$ is asymptotically optimal. For details, the readers may refer to \cite{12}.

	\subsection{NRT Metric}
	Different from the classical error-correcting codes under the Hamming metric. The NRT metric is defined on $\mathbb{F}_{q}^{s\times r}$. It is a poset (partially ordered set) metric. First, we need the NRT metric for a column vector. Denote by $[t]=\{1,2,\cdots,t\}$.

	For any codeword $\mathbf{c}=(\mathbf{c}_{1},\mathbf{c}_{2},\cdots,\mathbf{c}_{r})\in\mathbb{F}_{q}^{s\times r}$, its NRT weight is defined by the sum of NRT weights of all columns, \textit{i. e.},
	\begin{displaymath}
		w_{NRT}(\mathbf{c})=\sum^{r}_{j=1}w_{NRT}(\mathbf{c}_{j}).
	\end{displaymath}
 where the NRT weight of a column $\mathbf{c}_{j}= (c_{j,1},c_{j,2},\cdots,c_{j,s})\in\mathbb{F}_{q}^{s} , 1\leq j\leq r$ is defined by
	\begin{displaymath}
		w_{NRT}(\mathbf{c}_{j})=\begin{cases}
			s-i+1& \text{if}\ \mathbf{c}_{j}\neq \mathbf{0}\\
			0&\text{if}\ \mathbf{c}_{j}=\mathbf{0}.
		\end{cases}
	\end{displaymath}
    where $  i=\min\{b\in [s]: c_{j,b}\neq 0\}  $.
    
	The NRT distance between two codewords $A\neq B\in\mathbb{F}_{q}^{s\times r}$ is defined by:
	\begin{displaymath}
		d_{NRT}(A,B)=w_{NRT}(A-B).
	\end{displaymath}
If $\mathcal{C}$ is a linear code with parameters $[rs, k]_{q}$ under NRT metric, the minimum distance satisfies $d_{NRT}\leq rs-k + 1$; see \cite{51}.
     
\begin{definition}
    A linear code $C \subseteq\mathbb{F}_{q}^{s\times r}$   with parameters $[rs, k, d_{NRT}]_{q}$ under NRT metric is called an MDS NRT-metric code if it satisfies $d_{NRT}(\mathcal{C}) = rs- k + 1$.
\end{definition}

\section{Main Results and Techniques}
\label{sec:6}
    In this section, we introduce the main results and techniques used throughout this paper.
    
    First, the main tools are taken from local expansions and residue theorem of function fields. The duality theorem of HAG codes is deduced from the residue theorem; see Theorem \ref{t1}. We illustrate the reverse duality of HAG codes via local expansions for a given differential; see Theorem \ref{c1}. By carefully choosing a normalized differential, the conditions of reverse duality can be simplified; see Theorem \ref{cc}. In particular, we also determine the self-orthogonal and self-dual conditions in Theorem \ref{c3}. Finally, we provide some asymptotic results for HAG codes via function field towers; see Theorem \ref{c4}.  Table \ref{t2} lists the main objects of both HRS codes and HAG codes for comparison with prior constructions of HRS codes.
   \begin{table}[htbp]  
  \centering  
   \caption{Parameters of HRS and HAG codes}  
   \begin{tabular}{|c|c|c|}  
    \toprule
    Objects &HRS codes in \cite{99}& HAG codes  \\    \midrule
    Evaluation points& elements $\alpha_{j}, 1\leq j\leq r$& rational places  \\\midrule
   Evaluation function & $f\in\mathbb{F}_{q}[x]_{<k}$& $f\in\mathcal{L}(G)$  \\    \midrule
   Evaluation divisor&$\prod^{r}_{j=1}(x-\alpha_{j})^{s}$&$E=s\sum^{r}_{j=1}P_{j}$\\\midrule
    Dimension & $k$& $\ell(G)-\ell(G-E)$ \\    \midrule
    Minimum distance  & $d_{NRT}=sr-k+1$& $d_{NRT}\geq sr-\deg(G)$ \\ \midrule
    Generalized NRT distance&None&$d^{NRT}_{m}\geq\deg(E)-\deg(G)+\gamma_{m}(\mathcal{X})$\\\midrule
    Differential form&$ f(x)g(x)A(x)^{-1}dx  $&$f\omega$ or $fg\eta$\\\midrule
    Local multiplication & upper-triangular matrices $U_{j}, 1\leq j\leq r$&  upper-triangular matrices via local expansions\\\midrule
    Duality& reverse derivative of $g(x)/A_{j}(x)$& differentials in $\Omega_{F}(G-E)/\Omega_{F}(G)$\\\midrule
     Asymptotic results&None& $R+\delta>1-1/(s\lambda(F))$\\\midrule
   \bottomrule
  \end{tabular}
  \label{t2}  
\end{table}

	\section{hyper-derivative algebraic geometry codes and their duals}
	\label{sec:3}
	\subsection{Construction via Local Expansions}
	Let $F/\mathbb{F}_{q}$ be a function field with genus $\mathfrak{g}_{F}$. Take pairwise different rational places $D=\sum^{r}_{j=1}P_{j}$ and let $E=sD$. Take another divisor $ G\in\mathbb{D}_{F}$ with $\mathrm{Supp}(G)\cap\mathrm{Supp}(D)=\emptyset$.  Then we  have a local expansion of $f\in\mathcal{L}(G)$ at each $P_{j}, 1\leq j\leq r$. They can be given by  $$f=\sum_{a\geq 0}f_{j,a}t^{a}_{j}, \ f_{j,a}\in\mathbb{F}_{q}$$ for $1\le j\leq r$. 
	\begin{remark}
		Denote by $\mathcal{O}_{P_{j}}$ and $\mathbf{m}_{P_{j}}$ the valuation ring and the maximal ideal corresponding to place $P_{j}$, respectively. Then we have the isomorphism 
		\begin{displaymath}
		\mathcal{O}_{P_{j}}/\mathbf{m}^{s}_{P_{j}}\simeq\mathbb{F}_{q}[t_{j}]/(t^{s}_{j})
		\end{displaymath}
		for $1\leq j\leq r$.
	\end{remark}
	
	\begin{remark}
	In particular, the coefficients $f_{j,a},1\leq j\leq r, a\geq 0$ are Hasse derivatives $\partial^{(a)}f(\alpha_{j})$ if $F$ is a rational function field and the local parameter is chosen by $t_{j}=x-\alpha_{j}$ for $\alpha_{j}\in\mathbb{F}_{q}$; see \cite{9}.  
\end{remark}

	In addition, we also need a multiplier matrix $V=(v_{i,j})_{s\times r}$ with each element $v_{i,j}\in\mathbb{F}_{q}^{*}.$  Then the first construction of HAG code, \textit{i. e.}, evaluation HAG code can be given in the following  
	\begin{definition}\label{d1}
		The evaluation HAG  code $\mathcal{C}^{(s)}_{\mathcal{L}}(D,G,V)$ is defined as the image of the following evaluation map
		\begin{displaymath}
			\begin{split}
				\mathrm{ev}^{(s)}_{D,G,V}:\mathcal{L}(G)&\to\mathbb{F}_{q}^{s\times r}\\
				f&\mapsto (v_{a+1,j}f_{j,a})_{0\leq a\leq s-1, 1\leq j\leq r}.
			\end{split}
		\end{displaymath}
        When $V$ is the all-ones matrix, the evaluation HAG
codes reduce to the construction in \cite{50}.
	\end{definition}
	
	\begin{proposition}
		The dimension of $\mathcal{C}^{(s)}_{\mathcal{L}}(D,G,V)$ is given by
		\begin{displaymath}
			\dim\mathcal{C}^{(s)}_{\mathcal{L}}(D,G,V)=\ell(G)-\ell(G-E).
		\end{displaymath}
		In particular, we have $\ell(G-E)=0$ if $\deg G< sr$, which induces $  	\dim\mathcal{C}^{(s)}_{\mathcal{L}}(D,G,V)=\ell(G)$.
		\begin{proof}
			We only need to prove $\ker\mathrm{ev}^{(s)}_{D,G,V}=\mathcal{L}(G-E).$ The local coefficients $f_{j,a}=0$ for any $0\leq a\leq s-1$ and $1\leq j\leq r$ means $\mathrm{v}_{P_{j}}(f)\geq s$ for any $1\leq j\leq r$, \textit{i. e.}, $(f)+G-E\geq 0$. Then we have $\ker\mathrm{ev}^{(s)}_{D,G,V}=\mathcal{L}(G-E).$ By the Riemann-Roch theorem,  it follows immediately that $\ker\mathrm{ev}^{(s)}_{D,G,V}=\{0\}$   if $\deg G< sr$. 
		\end{proof}
	\end{proposition}
	\begin{remark}
		For any codeword $\mathrm{ev}_{D,G,V}^{(s)}(f)$, the NRT weight is given by 
		\begin{displaymath}
			w_{NRT}( \mathrm{ev}_{D,G,V}^{(s)}(f) )=sr-\sum^{r}_{j=1}\min\{\mathrm{v}_{P_{j}}(f),s\}.
		\end{displaymath}
	\end{remark}
		Note that NRT weight of the codeword $\mathrm{ev}_{D,G,V}^{(s)}(f)$ is independent of the choices of local parameters.

	The following theorem determines the minimum NRT distance for the code  $  	\mathcal{C}^{(s)}_{\mathcal{L}}(D,G,V)$, and the proof easily follows from the Riemann–Roch theorem.
	\begin{theorem}  \label{tt}
		Suppose $\deg(G)<sr$, then we have
		\begin{displaymath}
			\dim\mathcal{C}^{(s)}_{\mathcal{L}}(D,G,V)=\ell(G),\ \  d_{NRT}\left( \mathcal{C}^{(s)}_{\mathcal{L}}(D,G,V) \right)\geq sr-\deg(G).
		\end{displaymath}
		Furthermore,
		\begin{displaymath}
			k=\deg G+1-\mathfrak{g}_{F},\ \ d_{NRT}\geq sr-k+1-\mathfrak{g}_{F}
		\end{displaymath}
        if $\deg G\geq 2\mathfrak{g}_{F}-1$.
	\end{theorem}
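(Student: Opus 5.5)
The plan is to read the dimension off the preceding Proposition and to bound the weight of each nonzero codeword using the NRT-weight formula in the Remark together with the fact that a nonzero function has degree-zero principal divisor.

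\textbf{Dimension.} Since $\deg(G)<sr=\deg(E)$, the Proposition gives $\ell(G-E)=0$. Hence $\mathrm{ev}^{(s)}_{D,G,V}$ is injective and the dimension equals $\ell(G)$.

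\textbf{Distance.} Take a nonzero $f\in\mathcal{L}(G)$. Because the code is linear, it suffices to bound $w_{NRT}(\mathrm{ev}^{(s)}_{D,G,V}(f))$ from below.
\begin{itemize}
\item Put $m_j=\min\{\mathrm{v}_{P_j}(f),s\}$. Since $P_j\notin\mathrm{Supp}(G)$, we have $m_j\ge 0$.
\item Consider the divisor $A=\sum_{j=1}^r m_jP_j$. Since $\mathrm{v}_{P_j}(f)\ge m_j$ at each $P_j$ and $P_j\notin \mathrm{Supp}(G)$, the inequality $(f)\ge -G$ upgrades to $(f)\ge -G+A$. In other words, $f\in\mathcal{L}(G-A)\setminus\{0\}$.
\item A nonzero Riemann--Roch space has nonnegative degree, because $\deg(f)=0$. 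Therefore $\deg G-\sum_j m_j\ge 0$.
\item By the Remark, $w_{NRT}=sr-\sum_j m_j\ge sr-\deg G$. This is the claimed bound.
\end{itemize}

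\textbf{Case $\deg G\ge 2\mathfrak{g}_F-1$.} Riemann--Roch gives $\ell(G)=\deg G+1-\mathfrak{g}_F$, since $i(G)=0$ when the degree exceeds $2\mathfrak{g}_F-2$. Substituting $\deg G=k-1+\mathfrak{g}_F$ into the distance bound yields $d_{NRT}\ge sr-k+1-\mathfrak{g}_F$.

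The only point needing care is the step $(f)\ge -G+A$. It relies on the disjointness hypothesis $\mathrm{Supp}(G)\cap\mathrm{Supp}(D)=\emptyset$: without it, $m_j$ could not be read directly from $\mathrm{v}_{P_j}(f)$ as a nonnegative quantity compatible with $G$. Everything else is routine.
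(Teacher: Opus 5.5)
Your proof is correct and takes essentially the route the paper intends. The paper only says the result "easily follows from the Riemann--Roch theorem", and your argument makes that precise. The dimension comes from the preceding Proposition. The distance bound follows from $0\neq f\in\mathcal{L}(G-A)\Rightarrow \deg(G-A)\ge 0$, which is exactly the $m=1$ case of the paper's $E(U)$ argument in the generalized-distance theorem. Riemann--Roch with $i(G)=0$ then handles the case $\deg G\ge 2\mathfrak{g}_F-1$.
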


    In the next subsection, we utilize gonality sequences to improve the result of minimum NRT distance above.
	\subsection{The Generalized Minimum NRT Distance}
	Motivated by the study of the generalized minimum Hamming distance for the AG codes and generalized weights for poset metrics in \cite{14} \textit{etc.}, we also investigate the generalized minimum NRT distance for HAG codes.

    For a given function field $F/\mathbb{F}_{q}$, we denote by $\mathcal{X}/\mathbb{F}_{q}$ the corresponding algebraic curve. The gonality sequence for a curve $\mathcal{X}/\mathbb{F}_{q}$ is defined as
    \begin{displaymath}
        \gamma_m(\mathcal{X}) = \min\{\deg A : \ell(A) \geq m\},\ m \geq 1.
    \end{displaymath}
	Note that $\gamma_{1}=0$ and $\gamma_{m}=m-1$ if we take $\mathcal{X}/\mathbb{F}_{q}$ as the projective line $\mathbb{P}^{1}_{\mathbb{F}_{q}}$. Assume $\deg G<sr$. For a subspace $W=\mathrm{ev}^{(s)}_{D,G,V}(U)$ with dimension $m=\dim_{\mathbb{F}_{q}}(U)$, we choose
\begin{displaymath}
     e_{j}(U)=\min\left\lbrace s, \min_{0\neq f\in U}\mathrm{v}_{P_{j}}(f)\right\rbrace,\ E(U)=\sum e_{j}(U)P_{j}
\end{displaymath}
then the generalized minimum NRT weight is defined as
\begin{displaymath}
   d_{m}^{NRT}(\mathcal{C})=\min\left\lbrace w^{NRT}_{m}(W),\ W  \subseteq\mathcal{C}, \dim_{\mathbb{F}_{q}}W = m\right\rbrace.
\end{displaymath}
  for $w^{NRT}_{m}(W)\geq sr-\deg E(U)$.  
    \begin{theorem}
        Assume $\deg G<sr$. Let $k=\ell(G)$. For $1\leq m \leq k$, the $m$th generalized minimum NRT distance is lower bounded by
        \begin{displaymath}
            d^{NRT}_{m} \left(\mathcal{C}_{\mathcal{L}}^{(s)}(D,G,V)\right) \geq sr-\deg G+\gamma_{m}(\mathcal{X}).
        \end{displaymath}
        In particular, we have
         \begin{displaymath}
            d^{NRT}_{m} \left(\mathcal{C}_{\mathcal{L}}^{(s)}(D,G,V)\right) = sr-k+m,
        \end{displaymath}
        if we take $\mathcal{X}/\mathbb{F}_{q}=\mathbb{P}^{1}_{\mathbb{F}_{q}}$ and $G=(k-1)P_{\infty}$.
\begin{proof}
    By the injectivity of the evaluation map $e^{(s)}_{D,G,V}$, it is sufficient to consider the subspace $U\subseteq\mathcal{L}(G)$. Let
    \begin{displaymath}
        e_{j}(U)=\min\left\lbrace s, \min_{0\neq f\in U}\mathrm{v}_{P_{j}}(f)\right\rbrace, E(U)=\sum e_{j}(U)P_{j}.
    \end{displaymath}
    Then we have $U\subseteq\mathcal{L}(G-E(U))$ and $\ell(G-E(U))\geq m$. Furthermore, we have $\deg(G-E(U))\geq\gamma_{m}(\mathcal{X})$, which is equivalent to $ \deg E(U)\leq \deg G-\gamma_m(\mathcal{X}) $. On the other hand, for each $j$, the support weight of $U$ at $P_{j}$ is given by $s-e_{j}(U)$. Then we have
    \begin{displaymath}
        w_{m}^{NRT}(U)\geq sr- \deg E(U) \geq sr- \deg G + \gamma_{m}(\mathcal{X}).
    \end{displaymath}
Consider the projective line $\mathbb{P}_{\mathbb{F}_{q}}^{1}$ and substitute $\gamma_{m} = m- 1$. Then we have $ d^{NRT}_{m} \left(\mathcal{C}_{\mathcal{L}}^{(s)}(D,G,V)\right)\geq sr-k+m   $. To attain it, the construction of an $m-$dimensional subspace $U$ satisfying $\deg E(U)=k-m$ can be achieved by choosing nonnegative integers $e_{j} \leq s$ with $\sum_{j}e_j = k - m$ and $h(x) = \prod_{j}(x - \alpha_{j} )^{e_{j}}$. We can take $U=h\mathcal{L}((m-1)P_{\infty})$. Then we have $\dim_{\mathbb{F}_{q}}U=m$ and $U\subseteq\mathcal{L}((k-1)P_{\infty})$.
 Then we have the upper bound.
\end{proof}
    \end{theorem}

	\subsection{Duality and Residue Theorem}

First, we need the following definition of Euclidean inner product to study duality.
	For any two matrices $A=(a_{i,j}), B=(b_{i,j})\in\mathbb{F}_{q}^{s\times r}$, the Euclidean inner product between $A$ and $B$ is defined as
	\begin{displaymath}
		\left<A,B\right>_{E}=\sum^{s}_{i=1}\sum^{r}_{j=1}a_{i,j}b_{i,j}.
	\end{displaymath}
	Then the Euclidean dual of a given linear code $\mathcal{C}\subseteq\mathbb{F}_{q}^{s\times r}$ is given by
	\begin{displaymath}
		\mathcal{C}^{\perp}=\lbrace B\in\mathbb{F}_{q}^{s\times r}: \left<A,B\right>_{E}=0\ \text{for any}\ A\in\mathcal{C}\rbrace.
	\end{displaymath}

	In this subsection, we will determine the dual code of $ \mathcal{C}^{(s)}_{\mathcal{L}}(D,G,V)$ under the Euclidean inner product defined above. Before we prove the duality theorem rigorously, we need to introduce the second construction of HAG codes.

	Take one differential form $\omega\in\Omega_{F}(G-E)$. The corresponding Laurent series of $\omega$ at the place $P_{j}$ can be given by
	\begin{displaymath}
		\omega=\left(\sum_{n\geq -s}c_{j,n}(\omega)t_{j}^{n}\right)dt_{j}, \ c_{j,n}(\omega)\in\mathbb{F}_{q}.
	\end{displaymath}
	Then we have the second construction of HAG code, \textit{i. e.}, differential HAG code $\mathcal{C}^{(s)}_{\Omega_{F}}(D,G,V)$.
	\begin{definition}\label{d2}
		The differential HAG code $ \mathcal{C}^{(s)}_{\Omega_{F}}(D,G,V)$ is defined as the image of following residue map
		\begin{displaymath}
			\begin{split}
				\mathrm{res}_{D,G,V}^{(s)}:\Omega_{F}(G-E)&\to\mathbb{F}_{q}^{s\times r}\\
				\omega&\mapsto (v_{a+1,j}c_{j,-a-1}(\omega))_{0\leq a\leq s-1, 1\leq j\leq r}
			\end{split}
		\end{displaymath}
		\textit{i. e.} $\mathcal{C}^{(s)}_{\Omega_{F}}(D,G,V)=\lbrace  (v_{a+1,j}c_{j,-a-1}(\omega))_{0\leq a\leq s-1, 1\leq j\leq r}: \omega\in\Omega_{F}(G-E)   \rbrace$.
	\end{definition}
	The following theorem determines duality between two constructions of HAG codes.
	\begin{theorem}\label{t1}
		For any non-negative divisor $G$ with $\mathrm{Supp}(G)\cap\mathrm{Supp}(D)=\emptyset$, we have
		\begin{displaymath}
			\left(\mathcal{C}^{(s)}_{\mathcal{L}}(D,G,V)\right)^{\perp}=\mathcal{C}^{(s)}_{\Omega_{F}}(D,G,V^\vee),
		\end{displaymath}
		for $  V^\vee=(v^{-1}_{a+1,j})_{0\leq a\leq s-1, 1\leq j\leq r}  $.
		\begin{proof}
			Take $0\neq f\in\mathcal{L}(G)$ and $0\neq \omega\in\Omega_{F}(G-E)$. For any $1\leq j\leq r$, we have
			\begin{displaymath}
				\sum^{s-1}_{a=0}f_{j,a}c_{j,-a-1}(\omega)=\mathrm{res}_{P_{j}}(f\omega).
			\end{displaymath}
			
			Consider divisors $(f)$ and $(\omega)$. Then we have
			\begin{displaymath}
				(f)+(\omega)\geq -G+(G-E)=-E,
			\end{displaymath}
			which means $f\omega$ is regular except at the support of $D$.
			
			By the residue theorem, we have the following duality
			\begin{displaymath}
				\langle\mathrm{ev}^{(s)}_{D,G,V}(f), \mathrm{res}_{D,G,V^\vee} ^{(s)}(\omega)\rangle_{E}=\sum^{r}_{j=1}\mathrm{res}_{P_{j}}(f\omega)=0,
			\end{displaymath}
			for any nonzero function $f\in\mathcal{L}(G)$.
			
			Note that the kernel of $  \mathrm{res}_{D,G,V} ^{(s)}$ is $\Omega_{F}(G)$. Then we have  $ \dim_{\mathbb{F}_{q}}\mathcal{C}^{(s)}_{\Omega_{F}}(D,G,V)=i(G-E)-i(G)$. By the Riemann-Roch theorem, we have
			\begin{displaymath}
				\begin{split}
					i(G-E)-i(G)&=\ell(G-E)-\deg(G-E)-1+\mathfrak{g}_{F}-\ell(G)+\deg(G)+1-\mathfrak{g}_{F}\\
					&=sr-(\ell(G)-\ell(G-E)).
				\end{split}
			\end{displaymath}
			Then we have the desired result.
		\end{proof}
	\end{theorem}
	
	\subsection{Reverse Duality for a Given Differential}
    In this subsection, the duality for a given differential is determined.
	Fix a nonzero differential $\eta\neq 0$ with $\mathrm{v}_{P_{j}}(\eta)=-s$. Define a dual divisor
	\begin{displaymath}
		G^{\vee}=(\eta)-G+E.
	\end{displaymath}

\begin{lemma}\label{lp}
    Any differential  $\omega\in\Omega_{F}(G-E)$ can be written as $\omega=h\eta$ uniquely for some $h\in\mathcal{L}(G^{\vee})$.
    \begin{proof}
    First, it is verified in Proposition 1.5.9 in \cite{1} that $\Omega_{F}$ is a one-dimensional vector space over $F$. Then for any $\omega\in\Omega_{F}(G-E)$, $\omega$ can be represented as $\omega=h\eta$ for a given differential $\eta$ and $h\in F^{*}$.
    It is equivalent to verify  $\omega\in\Omega_{F}(G-E)\iff h\in\mathcal{L}(G^{\vee})$.
        First, $(\omega)=(h\eta)=(h)+(\eta)\geq G-E$ which means $(h)\geq G-E-(\eta)=-G^{\vee}$. Then we have $h\in\mathcal{L}(G^{\vee})$. Uniqueness: If $h_{1}\eta=h_{2}\eta$, then we have $(h_{1}-h_{2})\eta=0$. Since $\eta\neq 0$, we have $h_{1}=h_{2}$ immediately. Then we have $\mathcal{L}(G^{\vee})\to\Omega_{F}(G-E)$ is a $\mathbb{F}_{q}$-linear isomorphism.
    \end{proof}
\end{lemma}

    Denote by the reverse of any vector $(x_{0},x_{1},\cdots,x_{s-1})^{T}$ and such operation can be achieved by a matrix $R\in\mathbb{F}_{q}^{s\times s}$ \textit{i.e.},
    \begin{displaymath}
        \mathrm{rev}\left((x_{0},x_{1},\cdots,x_{s-1})^{T}\right)=R(x_{0},x_{1},\cdots,x_{s-1})^{T}=(x_{s-1},x_{s-2},\cdots,x_{0})^{T}
    \end{displaymath}
    for any $(x_{0},x_{1},\cdots,x_{s-1})\in\mathbb{F}_{q}^{s} $. Note that $R^{2}=I_{s}$.

    Denote by $D_{j}=\mathrm{diag}(v_{1,j},v_{2,j},\cdots,v_{s,j})$ and $D_{j}^{\perp}=\mathrm{diag}(v^{-1}_{s,j},v^{-1}_{s-1,j},\cdots,v_{1,j}^{-1})$. Then we have the following duality result.
    
	\begin{theorem}\label{c1}
    Let $\eta$ be a nonzero differential satisfying $\mathrm{v}_{P_j}(\eta)=-s$ for $1\leq j\leq r$, and set
    \begin{displaymath}
        G^{\vee}=(\eta)-G+E.
    \end{displaymath}
    Consider local expansions
    \begin{displaymath}
        \eta=t_j^{-s}u_j(t_j)\,dt_j,\qquad
        u_j(t_j)=\sum_{c\geq0}u_{j,c}t_j^c,\quad u_{j,0}\neq0,
    \end{displaymath}
    and define the upper-triangular Toeplitz matrix
    \begin{displaymath}
        U_j=\begin{pmatrix}
        u_{j,0}&u_{j,1}&\cdots&u_{j,s-1}\\
        0&u_{j,0}&\cdots&u_{j,s-2}\\
        \vdots&\ddots&\ddots&\vdots\\
        0&\cdots&0&u_{j,0}
        \end{pmatrix}.
    \end{displaymath}
    For $h\in\mathcal{L}(G^\vee)$, let
    $\mathrm{rev}( \bm h_j)=(h_{j,s-1},\ldots,h_{j,0})^T$ denote the reverse vector of its first $s$ local coefficients at $P_j$. Then the generator matix of $  \left(\mathcal{C}^{(s)}_{\mathcal{L}}(D,G,V)\right)^\perp   $ can be given by
    \begin{displaymath}
        \left\{\left(D_j^{-1}U_j\mathrm{rev}(\bm h_j)\right)_{j=1}^{r}:h\in\mathcal{L}(G^\vee)\right\}.
    \end{displaymath}
    \end{theorem}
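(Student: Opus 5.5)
The plan is to combine Theorem \ref{t1} with Lemma \ref{lp} and then compute the residue coordinates of $h\eta$ explicitly from local expansions.

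By Theorem \ref{t1},
\begin{displaymath}
\left(\mathcal{C}^{(s)}_{\mathcal{L}}(D,G,V)\right)^{\perp}=\mathcal{C}^{(s)}_{\Omega_{F}}(D,G,V^{\vee})=\left\{\left(v^{-1}_{a+1,j}c_{j,-a-1}(\omega)\right)_{a,j}:\omega\in\Omega_{F}(G-E)\right\}.
\end{displaymath}
By Lemma \ref{lp}, every $\omega\in\Omega_F(G-E)$ is uniquely $h\eta$ with $h\in\mathcal{L}(G^\vee)$. So it suffices to show that, for each $j$, the column $\left(c_{j,-a-1}(h\eta)\right)_{0\le a\le s-1}$ equals $U_j\,\mathrm{rev}(\bm h_j)$. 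The factor $D_j^{-1}=\mathrm{diag}(v^{-1}_{1,j},\dots,v^{-1}_{s,j})$ then accounts for $V^\vee$ row by row.

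First I check that $h$ has a genuine power series expansion at $P_j$. Since $\mathrm{Supp}(G)\cap\mathrm{Supp}(D)=\emptyset$ and $\mathrm{v}_{P_j}(\eta)=-s$, we get
\begin{displaymath}
\mathrm{v}_{P_j}(G^\vee)=-s-0+s=0.
\end{displaymath}
Hence $\mathrm{v}_{P_j}(h)\ge 0$ and $h=\sum_{b\ge0}h_{j,b}t_j^b$. Multiplying expansions gives
\begin{displaymath}
h\eta=t_j^{-s}\Big(\sum_{n\ge0}\big(\textstyle\sum_{b+c=n}h_{j,b}u_{j,c}\big)t_j^n\Big)dt_j .
\end{displaymath}
The coefficient of $t_j^{-a-1}$ is therefore the coefficient of $t_j^{s-1-a}$ in $hu_j$:
\begin{displaymath}
c_{j,-a-1}(h\eta)=\sum_{c=0}^{s-1-a}u_{j,c}\,h_{j,s-1-a-c}.
\end{displaymath}
In particular only $h_{j,0},\dots,h_{j,s-1}$ enter.

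Next I match this with the matrix product. Write $x=\mathrm{rev}(\bm h_j)$, so $x_i=h_{j,s-1-i}$ for $0\le i\le s-1$. Row $a$ of the upper-triangular Toeplitz matrix $U_j$ has entry $u_{j,i-a}$ in column $i\ge a$. Thus
\begin{displaymath}
(U_jx)_a=\sum_{i=a}^{s-1}u_{j,i-a}h_{j,s-1-i}.
\end{displaymath}
The substitution $c=i-a$ turns this into exactly the sum above. Hence the $j$-th column of $\mathrm{res}^{(s)}_{D,G,V^\vee}(h\eta)$ is $D_j^{-1}U_j\,\mathrm{rev}(\bm h_j)$.

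Finally, as $h$ ranges over $\mathcal{L}(G^\vee)$, these matrices range over the whole dual code, by the bijection of Lemma \ref{lp}. Any basis of $\mathcal{L}(G^\vee)$ therefore yields a spanning set, from which a generator matrix is extracted.

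I expect the only delicate step to be the index bookkeeping in the reversal. One must check that the shift by $-s$ in $\eta$, combined with reading the residue coordinates at $t_j^{-1},\dots,t_j^{-s}$, produces exactly the reversed vector multiplied by $U_j$ on the left. In particular it must be $U_j$ and not its transpose, and the scaling must be $D_j^{-1}$ rather than $D_j^{\perp}$, since $V^\vee$ scales row $a+1$ by $v^{-1}_{a+1,j}$.
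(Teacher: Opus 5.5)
Your proposal is correct and follows the same route as the paper. You identify the dual with $\mathcal{C}^{(s)}_{\Omega_{F}}(D,G,V^{\vee})$ via the residue-theorem duality, write $\omega=h\eta$ using Lemma~\ref{lp}, and read off $c_{j,-a-1}(h\eta)=[t_j^{s-1-a}](u_jh)$ as the $a$-th entry of $U_j\,\mathrm{rev}(\bm h_j)$; your explicit checks that $\mathrm{v}_{P_j}(G^{\vee})=0$ (so $h$ has a power-series expansion at $P_j$) and of the Toeplitz index substitution are details the paper leaves implicit, and both are right.
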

	\begin{proof}

Based on the decomposition $\omega=h\eta$ in Lemma \ref{lp}, we have the following local expansions at $P_{j}$:
\begin{displaymath}
    \eta = t^{-s}_{j}u_{j}(t_{j})dt_j,\ u_j(t_j) = \sum_{c\geq 0} u_{j,c}t^{c}_{j},u_{j,0}\neq 0,
\end{displaymath}
and $h =\sum_{b\geq 0} h_{j,b}t^{b}_{j}$.
    
	Each coefficient can be represented by $$c_{j,-a-1}(\omega)= [t_{j}^{s-1-a}]  (u_{j}(t_{j})h(t_{j})),$$ where $ [t_{j}^{s-1-a}]  (u_{i}(t_{j})h(t_{j}))$ denotes the coefficient of $u_{i}(t_{j})h(t_{j})$ at monomial $t_{j}^{s-1-a}$. By direct calculation, the coefficients can be represented explicitly by
    \begin{displaymath}
    \begin{split}
        &[t_{j}^{s-1}](u_{i}(t_{j})h(t_{j}))=u_{j,0}h_{j,s-1}+\cdots+u_{j,s-1}h_{j,0}  \\
        &\cdots\\
         &[t_{j}](u_{i}(t_{j})h(t_{j}))=u_{j,0}h_{j,1}+u_{j,1}h_{j,0}\\
        &[t_{j}^{0}](u_{i}(t_{j})h(t_{j}))=u_{j,0}h_{j,0}.\\
        \end{split}
    \end{displaymath}

 Note that codewords naturally reverse by the duality structure. Consider the reverse vector $\mathrm{rev}(\bm h_{j})=R(\bm h_{j})=(h_{j,s-1},h_{j,s-2},\cdots,h_{j,0})^{T}$ for each $1\leq j\leq r$. Then each column of generator matrix of differential HAG code can be decomposed into
	\begin{displaymath}
		D_{j}^{-1}c_{j}(\omega)=D_{j}^{-1}U_{j}R\bm h_{j},
	\end{displaymath}
	where each $U_{j}$ is given by
	\begin{equation}\label{m1}
		U_{j}=\begin{pmatrix}
			u_{j,0}&u_{j,1}&\cdots&u_{j,s-1}\\
			0&u_{j,0}&\cdots&u_{j,s-2}\\
			\vdots&\ddots&\ddots&\vdots\\
			0&\cdots&0&u_{j,0}
		\end{pmatrix}
	\end{equation}
for $1\leq j\leq r$.
        \end{proof}

	\begin{example}\label{e1}
		Take a rational function field $\mathbb{F}_{q}(x)$. Choose places $P_{j}: x=\alpha_{j},1\leq j\leq r$ and divisor $G=(k-1)P_{\infty}$. Let $A(x)=\prod^{r}_{j=1}(x-\alpha_{j})^{s}$ and $\eta=\frac{dx}{A(x)}$. Then we have
		\begin{displaymath}
			(\eta)=-sD+(sr-2)P_{\infty}, \ G^\vee=(sr-k-1)P_{\infty}.
		\end{displaymath}
		Consider local expressions. For each rational place $P_{j}$,
		\begin{displaymath}
			\eta=(x-\alpha_{j})^{-s}A_{j}(x)^{-1}d(x-\alpha_{j}),
		\end{displaymath}
		Then we have 
		\begin{displaymath}
			c_{a+1,j}(g)=v_{a+1,j}^{-1}\partial^{(s-1-a)}\left(\frac{g}{A_{j}}\right)(\alpha_{j})
		\end{displaymath}
		for $g\in\mathcal{L}((sr-k-1)P_{\infty})$. This is the result given in \cite{99}.
	\end{example}

     \begin{example}[Elliptic Function Field]\label{e2}
    Let $F=\mathbb{F}_{5}(\mathcal{X})$ be an elliptic function field with the defining equation:
    \begin{displaymath}
        \mathcal{X}/\mathbb{F}_{5}:\quad y^{2}=x^{3}+x+1
    \end{displaymath}
     with a unique point at infinity $O$. Consider three rational places
    \begin{displaymath}
        P_{1}=(0,1),\qquad P_{2}=(2,1),\qquad P_{3}=(3,1),
    \end{displaymath}
    divisor $D=\sum^{3}_{j=1}P_{j}$, $E=2D$, $G=3O$, and $V=(1)_{2\times 3}$.
    Since $(x)_{\infty}=2O$ and $(y)_{\infty}=3O$, the Riemann--Roch space is
    \begin{displaymath}
        \mathcal{L}(3O)=\langle 1,x,y\rangle_{\mathbb{F}_{5}}.
    \end{displaymath}

    For $P_{j}=(\alpha_{j},1)$, we have $2y(P_j)=2\neq 0$ and $t_{j}=x-\alpha_{j}$ is a local parameter. From
    \begin{displaymath}
        2y\frac{dy}{dx}=3x^{2}+1,
    \end{displaymath}
we have the local expansions for the function $y$ at each place $P_{j}, 1\leq j\leq 3$.
     \begin{table}[htbp]  
     \caption{ local expansions for $y$ }
  \centering
    \begin{tabular}{|c|c|c|}
    \toprule
        &$\alpha_{j}$&$y$\\ \midrule
       $P_{1}$&$0$&$1+3t_{1}+3t_{1}^{2}+O(t_{1}^{3})$\\\midrule
        $P_{2}$&$2$&$1+4t_{2}+O(t_{2}^{3})$\\\midrule
        $P_{3}$&$3$&$1+4t_{3}+4t_{3}^{2}+O(t_{3}^{3}).$\\
        \bottomrule
    \end{tabular}
    \end{table}\label{t1}
    
    Then the HAG code
    $\mathcal{C}^{(2)}_{\mathcal{L}}(D,3O,V)$ with respect to the basis
    $\mathcal{L}(3O)$ can be generated by
    \begin{displaymath}
        G_{1}=
        \begin{pmatrix}
        1&1&1\\
        0&0&0\\
        \end{pmatrix},
          G_{x}=
        \begin{pmatrix}
        0&2&3\\
        1&1&1\\
        \end{pmatrix},
          G_{y}=
        \begin{pmatrix}
        1&1&1\\
        3&4&4\\
        \end{pmatrix},
    \end{displaymath}
    \textit{i. e.}, $\mathcal{C}^{(2)}(D,3O,V)=\mathrm{span}_{\mathbb{F}_{5}}\{G_{1},G_{x},G_{y}\}$.
    
     In particular, the general lower bound gives
    $d_{NRT}\geq 6-\deg(G)=3$. On the other hand, for $f=1-y$ we obtain the codeword
    \begin{displaymath}
        \mathrm{ev}^{(2)}_{D,G,V}(1-y)
        =\big((0,2)^{T},(0,1)^{T},(0,1)^{T}\big),
    \end{displaymath}
    whose NRT weight is $3$. Thus we have
    \begin{displaymath}
        d_{NRT}\left(\mathcal{C}^{(2)}_{\mathcal{L}}(D,3O,V)\right)=3.
    \end{displaymath}

    This example also makes the generalized NRT weights explicit. For an elliptic curve,
    $\gamma_{1}=0$ and $\gamma_{m}=m$ for $m\geq 2$. Hence, the generalized-weight theorem yields
    \begin{displaymath}
        d_{1}^{NRT}\left(\mathcal{C}^{(2)}_{\mathcal{L}}(D,3O,V)\right)\geq 3,\quad d_{2}^{NRT}\left(\mathcal{C}^{(2)}_{\mathcal{L}}(D,3O,V)\right)\geq 5,\quad d_{3}^{NRT}\left(\mathcal{C}^{(2)}_{\mathcal{L}}(D,3O,V)\right)\geq 6.
    \end{displaymath}
    The first equality has already been shown. For
    $U=\langle x,y-1\rangle_{\mathbb{F}_{5}}$, we have
    $e_{1}(U)=1$ and $e_{2}(U)=e_{3}(U)=0$, so 
$w_{2}^{NRT}(U)=6-1=5$. Finally,  $d_{3}^{NRT}\left(\mathcal{C}^{(2)}_{\mathcal{L}}(D,3O,V)\right)=6$ holds trivially. Consequently, we have
    \begin{displaymath}
        \left(d_{1}^{NRT}\left(\mathcal{C}^{(2)}_{\mathcal{L}}(D,3O,V)\right),d_{2}^{NRT}\left(\mathcal{C}^{(2)}_{\mathcal{L}}(D,3O,V)\right),d_{3}^{NRT}\left(\mathcal{C}^{(2)}_{\mathcal{L}}(D,3O,V)\right)\right)=(3,5,6).
    \end{displaymath}

 Next, we explicitly compute the dual code. Consider the invariant differential
    \begin{displaymath}
        \omega_{0}=\frac{dx}{2y}
    \end{displaymath}
   and it satisfies $(\omega_{0})=0$. Moreover,
    \begin{displaymath}
        (y-1)=P_{1}+P_{2}+P_{3}-3O=D-3O.
    \end{displaymath}
    Define
    \begin{displaymath}
        \eta=\frac{\omega_{0}}{(y-1)^{2}}=\frac{dx}{2y(y-1)^{2}}.
    \end{displaymath}
    Then
    \begin{displaymath}
        (\eta)=-2D+6O,
        \qquad
        G^{\vee}=(\eta)-G+E=3O=G.
    \end{displaymath}

    Consider the expansion
    \begin{displaymath}
\eta=t_{j}^{-2}\left(u_{j,0}+u_{j,1}t_{j}+O(t_{j}^{2})\right)dt_{j}.
    \end{displaymath}
    If $y=1+a_jt_j+b_jt_j^2+O(t_j^3)$, then
    \begin{displaymath}
        u_{j,0}=(2a_j^2)^{-1},\qquad
        u_{j,1}=-u_{j,0}\left(a_j+2b_ja_j^{-1}\right).
    \end{displaymath}
    Using the above expansions of $y$, we have
    \begin{displaymath}
        (u_{1,0},u_{1,1})=(2,0),\qquad
        (u_{2,0},u_{2,1})=(3,3),\qquad
        (u_{3,0},u_{3,1})=(3,2).
    \end{displaymath}
    Hence the equivalence transformation matrices in Theorem \ref{c1} can be given by
    \begin{displaymath}
        U_{1}=\begin{pmatrix}2&0\\0&2\end{pmatrix},\qquad
        U_{2}=\begin{pmatrix}3&3\\0&3\end{pmatrix},\qquad
        U_{3}=\begin{pmatrix}3&2\\0&3\end{pmatrix}.
    \end{displaymath}
    Since $V$ is the all-one multiplier matrix, we have $T_{j}=U_{j}$ for $D_{j}=I_{2}$.
    Then the dual code $ \mathcal{C}^{(2)}_{\mathcal{L}}(D,3O,V)^{\perp}  $ is generated by
    \begin{displaymath}
        H_{1}=
        \begin{pmatrix}
        0&3&2\\
        2&3&3\\
        \end{pmatrix},
         H_{x}=
        \begin{pmatrix}
        2&4&4\\
        0&1&4\\
        \end{pmatrix},
         H_{y}=
        \begin{pmatrix}
        1&0&4\\
        2&3&3\\
        \end{pmatrix},
    \end{displaymath}
\textit{i. e.}, $\mathcal{C}^{(2)}(D,3O,V)^{\perp}=\mathrm{span}_{\mathbb{F}_{5}}\{H_{1},H_{x},H_{y}\}$.
    One checks directly that each of $H_{1},H_{x},H_{y}$ is orthogonal to
    $G_{1},G_{x},G_{y}$ under the Euclidean inner product.
    
    This illustrates that although
    $G^{\vee}=G$, the Euclidean dual is obtained from the reversed evaluation code through
    nontrivial local triangular blocks $U_{j}, 1\leq j\leq r$.
    \end{example}

	\begin{remark}
		In the construction of HRS codes, the evaluation function is chosen from the Riemann-Roch space $\mathcal{L}((k-1)P_{\infty})$. For an arbitrary function field,  the dual divisor $G^\vee$
		is generally not in the form $(sr-k-1)P_{\infty}$. Therefore, the dual of an HRS code is not necessarily a HRS code, while it needs further modification of local matrices $U_{j}, 1\leq j\leq r$. 
	\end{remark}
	
	\subsection{ A Normalized Choice of Differentials  }
	In the previous subsection, we have shown that the dual of $  \mathcal{C}^{(s)}_{\mathcal{L}}(D,G,V)$ is reversely pairwise equivalent to 
	$ \mathcal{C}^{(s)}_{\mathcal{L}}(D,G^\vee,V^\perp)$ by upper-triangular matrices $U_{j}, 1\leq j\leq r$. In Example \ref{e1}, the given differential $\eta=dx/A(x)$ induces the modification of local matrices $U_{j}, 1\leq j\leq r$.

    In this subsection, we shall provide a modified construction such that each $U_{j}\equiv I_{s}$, \textit{i. e.}, $u_{j}(t_{j})\equiv 1$ (mod $t_{j}^{s}$) for $1\leq j\leq r$. Through this modification, we can obtain a simplified version of reverse duality theorem.

Denote by $E=sD=s\sum^{r}_{j=1}P_{j}$ as before and $PP(E)$ the principal differential space which consists of differentials with indices of poles less than or equal to $s$, \textit{i. e.},
\begin{displaymath}
    PP(E)=\bigoplus^{r}_{j=1}\bigoplus^{-1}_{k=-s}\mathbb{F}_{q}t_{j}^{k}dt_{j}.
\end{displaymath}
Note that $\dim_{\mathbb{F}_{q}}PP(E)=sr$.

Define the natural map $\pi:\Omega_{F}(-E)\to PP(E)$. Then we have the following Lemma.
\begin{lemma}\label{l1}
For $\pi_j=\sum_{k=-s}^{-1}a_{j,k}t_j^kdt_j$, define
\begin{displaymath}
    S=\left\{(\pi_j)_{j=1}^r\in PP(E):
    \sum_{j=1}^r\operatorname{res}_{P_j}(\pi_j)=0\right\}.
\end{displaymath}
Then $\operatorname{Im}(\pi)=S$.
\begin{proof}
The global residue theorem gives $\operatorname{Im}(\pi)\subseteq S$. Let $K$ be a canonical divisor. Since
$\Omega_F(-E)\simeq\mathcal L(K+E)$,
\begin{displaymath}
    \dim\Omega_F(-E)=\mathfrak g_F-1+\deg E.
\end{displaymath}
The kernel of $\pi$ is the space $\Omega_F(0)$ of holomorphic differentials, which has dimension $\mathfrak g_F$. Hence
\begin{displaymath}
    \dim\operatorname{Im}(\pi)=\deg E-1=sr-1.
\end{displaymath}
On the other hand, $PP(E)$ has dimension $sr$, and the residue map
$PP(E)\to\mathbb F_q$ is surjective. Therefore $S$ also has dimension $sr-1$, and the inclusion is an equality.
\end{proof}
\end{lemma}

	\begin{lemma}\label{l2}
		For any local parameters $t_{1},\cdots,t_{r}$, there exists a $\mathbb{F}_{q}$-rational differential $\eta$ regular at $\mathbb{P}_{F}\setminus\mathrm{Supp}(D)$ and 
		\begin{displaymath}
			\eta=(t_{j}^{-s}+O(1))dt_{j}=t_{j}^{-s}(1+O(t^{s}_{j}))dt_{j}
		\end{displaymath}
		for any $1\leq j\leq r$ and $s\geq 2$.
		\begin{proof}
			For each place $P_{j}: 1\leq j\leq r$, we fix a principal part $t_{j}^{-s}dt_{j}$. For $s\geq 2$, the local residues are identically zero. By Lemma \ref{l1} and strong approximation theorem in \cite{1}, such a global differential exists.
        \end{proof}
	\end{lemma}
	
	\begin{theorem}\label{cc}
		Assume $s\geq 2$ and fix a differential $\eta$ in Lemma \ref{c1}. The dual of HAG code $  \mathcal{C}^{(s)}_{\mathcal{L}}(D,G,V)$ is exactly the reverse code $\mathrm{rev}\left( \mathcal{C}^{(s)}_{\mathcal{L}}(D,G^\vee,V^\perp)\right)$.
		where $V^{\perp}=(v^{-1}_{s-a-1,j})_{0\leq a\leq s-1,1\leq j\leq r}$.
		\begin{proof}
			It follows from $u_{j}(t_{j})\equiv 1$ (mod $t_{j}^{s}$) that $U_{j}\equiv I_{s}$ for any $1\leq j\leq r$. Then 
           each codeword $D_{j}^{-1}c_{j}(\omega)$ and reverse vector $\mathrm{rev}(\bm h_{j})$ satisfy
            \begin{displaymath}
                D_{j}^{-1}c_{j}(\omega)=D_{j}^{-1} U_{j}\mathrm{rev}( \bm h_{j})=D_{j}^{-1} \mathrm{rev}(\bm h_{j})
            \end{displaymath}
            for any $1\leq j\leq r$. Then we have the desired result.
		\end{proof}
	\end{theorem}

	\subsection{Self-duality via Reverse Bilinear Form}
	As we discussed above, the duals of HAG codes naturally reverse via residue theorem. To discuss the self-orthogonality and self-duality, we need the following reverse bilinear form.
	
	\begin{definition}\label{dd}
		For any $X,Y\in\mathbb{F}_{q}^{s\times r }$, reverse bilinear form is given by
		\begin{displaymath}
			[X,Y]_{\mathrm{rev}}=\sum^{r}_{j=1}\sum^{s-1}_{a=0}X_{a+1,j}Y_{s-a,j}.
		\end{displaymath}
        \end{definition}
    For any code $\mathcal{C}\in\mathbb{F}_{q}^{s\times r}$, we define 
    \begin{displaymath}
        C^{\perp_{\mathrm{rev}}} :=\{Y :[X,Y]_{\mathrm{rev}} =0\ \text{for all}\ X\in\mathcal{C}\}=\mathrm{rev}(\mathcal{C}^{\perp})
    \end{displaymath}

        \begin{lemma}\label{ll}
        Assume $v_{a+1,j}v_{s-a,j}=\lambda_{j}, 0\leq a\leq s-1$. For any $f,g\in\mathcal{L}(G)$, we have
		\begin{displaymath}
			[\mathrm{ev}^{(s)}_{D,G,V}(f), \mathrm{ev}^{(s)}_{D,G,V}(g)  ]=\sum^{r}_{j=1}\lambda_{j}[t_{j}^{s-1}](fg).
		\end{displaymath}
        \end{lemma}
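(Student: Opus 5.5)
The argument unwinds Definition \ref{dd} column by column and recognizes each column's contribution as a single coefficient of the product $fg$. Writing $X=\mathrm{ev}^{(s)}_{D,G,V}(f)$ and $Y=\mathrm{ev}^{(s)}_{D,G,V}(g)$, Definition \ref{d1} gives $X_{a+1,j}=v_{a+1,j}f_{j,a}$. For $Y$, the index $s-a$ corresponds to the local coefficient of order $s-1-a$, so $Y_{s-a,j}=v_{s-a,j}g_{j,s-1-a}$. Here $f=\sum_{a\ge0}f_{j,a}t_j^a$ and $g=\sum_{b\ge0}g_{j,b}t_j^b$ are the local expansions at $P_j$. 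Both exist as power series with nonnegative exponents because $\mathrm{Supp}(G)\cap\mathrm{Supp}(D)=\emptyset$, so $f$ and $g$ are regular at each $P_j$.

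Substituting gives
\begin{displaymath}
[X,Y]_{\mathrm{rev}}=\sum_{j=1}^{r}\sum_{a=0}^{s-1}v_{a+1,j}v_{s-a,j}\,f_{j,a}\,g_{j,s-1-a}.
\end{displaymath}
The hypothesis $v_{a+1,j}v_{s-a,j}=\lambda_j$ is independent of $a$, so the factor $\lambda_j$ comes out of the inner sum. What remains in column $j$ is $\sum_{a=0}^{s-1}f_{j,a}g_{j,s-1-a}$.

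Multiplying the two local expansions at $P_j$ (the Cauchy product in $\mathbb{F}_q[[t_j]]$) shows that the coefficient of $t_j^{s-1}$ in $fg$ is $\sum_{a+b=s-1,\;a,b\ge0}f_{j,a}g_{j,b}$. This is exactly the inner sum, since $a$ runs over $0,\dots,s-1$ and $b=s-1-a$. Because local expansion is a ring homomorphism $\mathcal{O}_{P_j}\to\mathbb{F}_q[[t_j]]$, this coefficient is $[t_j^{s-1}](fg)$ computed from the local expansion of the function $fg$ itself, which gives the claim.

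The only step needing care is the index bookkeeping: row $s-a$ of $Y$ must be matched to the coefficient $g_{j,s-1-a}$, and the exponents must sum to exactly $s-1$. The justification that the expansion of $fg$ is the product of the expansions needs only the remark that $\mathcal{O}_{P_j}/\mathbf{m}_{P_j}^s\simeq\mathbb{F}_q[t_j]/(t_j^s)$ is a ring isomorphism. No residue theorem is needed for this lemma.
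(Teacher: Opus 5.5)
Your proposal is correct and follows the paper's own three-line proof: substitute $X_{a+1,j}=v_{a+1,j}f_{j,a}$ and $Y_{s-a,j}=v_{s-a,j}g_{j,s-1-a}$ into Definition~\ref{dd}, factor out $\lambda_j$, and recognize the inner sum as the coefficient of $t_j^{s-1}$ in the product of the local expansions. Your remarks on regularity at $P_j$ and on multiplicativity of local expansions only make explicit what the paper leaves implicit.
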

        \begin{proof}
              By definition, we have 
        \begin{displaymath}
        \begin{split}
            	[\mathrm{ev}^{(s)}_{D,G,V}(f),\mathrm{ev}^{(s)}_{D,G,V}(g)  ]&=\sum^{r}_{j=1}\sum^{s-1}_{a=0} v_{a+1,j}v_{s-a,j}f_{j,a}g_{s-a-1,j} \\
                &=\sum^{r}_{j=1}\sum^{s-1}_{a=0} \lambda_{j}f_{j,a}g_{s-a-1,j}\\
                &=\sum^{r}_{j=1}\lambda_{j}[t_{j}^{s-1}](fg).\\
                \end{split}
                \end{displaymath}
        \end{proof}
		
		\begin{proposition}
			If	$\dim  \mathcal{C}^{(s)}_{\mathcal{L}}(D,G,V)=\frac{sr}{2}$, the HAG code $\mathcal{C}^{(s)}_{\mathcal{L}}(D,G,V)$ is reverse self-dual if and only if 
			\begin{displaymath}
				\sum^{r}_{j=1}\lambda_{j}[t_{j}^{s-1}](fg)=0\ \text{for all}\ f,g\in\mathcal{L}(G).
			\end{displaymath}
            \begin{proof}
            Set $k=\ell(G)-\ell(G-E)=\dim_{\mathbb{F}_{q}}\mathcal{C}_{\mathcal{L}}^{(s)}(D,G,V)$. If  $\dim  \mathcal{C}^{(s)}_{\mathcal{L}}(D,G,V)=\frac{sr}{2}$ is reversely self-dual, we have $$\dim_{\mathbb{F}_{q}} \mathcal{C}_{\mathcal{L}}^{(s)}(D,G,V)^{\perp_{\mathrm{rev}}}=rs-k=k.$$ Combining with Lemma \ref{ll} and $k=\frac{sr}{2}$, we have the desired results.
            \end{proof}
		\end{proposition}

        Then the result of self-duality of HRS codes in \cite{99} can be immediately deduced in our construction framework.
		\begin{example}(Rational Function Field)
			Consider $\mathbb{F}_{q}(x)$ and take two monomials $f(x)=x^{a}, g(x)=x^{b},$ for $ a,b\leq k-1$. Then HRS code $\mathcal{C}^{(s)}_{\mathcal{L}}(D,(k-1)P_{\infty},V)$ is reverse self-dual if and only if $2k=sr$ and
			\begin{displaymath}
				\binom{a+b}{s-1}\sum^{r}_{j=1}\lambda_{j}\alpha_{j}^{a+b-s+1}=0.
			\end{displaymath}
           if $a+b\geq s-1$. In particular, if $a+b<s-1$, the summation is defined as $0$.
		\end{example}

		\subsection{Self-duality via Residue Bilinear Form}
			We now define the second bilinear form for a fixed differential $\eta$ satisfying $\mathrm{v}_{P_j}(\eta)=-s$ for all $j$. Let
        \begin{displaymath}
            A_j=\mathbb{F}_q[t_j]/(t_j^s),\qquad
            \phi_j\!\left(\sum_{i=0}^{s-1}a_it_j^i\right)=(a_0,\ldots,a_{s-1})^T.
        \end{displaymath}
        Define the local residue pairing by
        \begin{displaymath}
            B_{\eta,j}(\phi_j(a),\phi_j(b))=\mathrm{res}_{P_j}(ab\eta).
        \end{displaymath}
 Furthermore, 
       the bilinear form can be given by $$ B_{\eta,j}
^{(V)}(\phi_{j}(a),\phi_{j}(b))=B_{\eta,j}(D_j^{-1}\phi_{j}(a)   ,D_j^{-1}\phi_{j}(b)  )$$ with respect to the multiplier matrix $V$. Then we have the global differential bilinear form:
		\begin{displaymath}
			B ^{(V)} _{\eta}=\sum^{r}_{j=1}B^{(V)}_{\eta,j}.
		\end{displaymath}
For any code $\mathcal{C}\in\mathbb{F}_{q}^{s\times r}$, we define
        $$\mathcal{C}^{\perp_{B^{(V)}_{\eta}}} =\{Y:B^{(V)}_{\eta}(X,Y)= 0, \text{for all}\ X\in\mathcal{C}\}.$$
    
		For the given basis $1,t_j,\ldots,t_j^{s-1}$, the local Gram matrix can be given by
        \begin{displaymath}
            H_j=(u_{j,s-1-a-b})_{0\leq a,b\leq s-1},
        \end{displaymath}
        where $u_{j,c}=0$ for $c<0$. Hence the determinant can be calculated by
        \begin{displaymath}
            \det H_j=(-1)^{s(s-1)/2}u_{j,0}^s\neq0.
        \end{displaymath}
        Thus both the local forms and $B_\eta^{(V)}$ are nondegenerate.

        The following theorem gives self-orthogonality and self-duality with respect to the multiplier-adjusted residue form.
		\begin{theorem}[Self-orthogonality and self-duality for the residue form]\label{c3}
        
 			Suppose $\eta\neq0$ and $v_{P_j}(\eta)=-s$ for every $P_j\in\operatorname{Supp}(D)$. Then  
			\begin{enumerate}
				\item $\mathcal{C}^{(s)}_{\mathcal{L}}(D,G,V)$ is self-orthogonal with respect to $B_{\eta}^{(V)}$ if $(\eta)\geq 2G-E$.
				\item  $\mathcal{C}^{(s)}_{\mathcal{L}}(D,G,V)$ is self-dual with respect to $B_{\eta}^{(V)}$ if $(\eta)=2G-E$.
			\end{enumerate}
			\begin{proof}
            Denote by $X_{j}(f)$ the $j$-th column of $X(f)=\mathrm{Im}(\mathrm{ev}_{D,G,V}^{(s)}(f))$.
				For any $f,g\in\mathcal{L}(G)$, we have $(fg\eta)\geq (\eta)-2G\geq -E.$ Thus $fg\eta$ has no poles outside $\operatorname{Supp}(D)$, and 
				\begin{displaymath}
					\sum^{r}_{j=1}B_{\eta,j}(D_{j}^{-1}X_{j}(f),  D_{j}^{-1}X_{j}(g) )=\sum^{r}_{j=1}\mathrm{res}_{P_{j}}(fg\eta)=0
				\end{displaymath}
               by the global residue theorem.
                
				If $2G=(\eta)+E$, then $K=(\eta)$ satisfies $K-G=G-E$.
                By the Riemann-Roch theorem, we have
                \begin{displaymath}
                \begin{split}
                    \dim_{\mathbb{F}_{q}}\mathcal{C}^{(s)}_{\mathcal{L}}(D,G,V)&=\ell(G)-\ell(G-E)\\
                    &=\ell(G)-\ell(K-G)\\
                    &=\deg G+1-\mathfrak{g}_{F}.
                    \end{split}
                \end{displaymath}
                Since $2\deg G-sr=2\mathfrak g_F-2$, $\deg G+1-\mathfrak{g}_{F}=\frac{sr}{2}$. Because $B_{\eta}^{(V)}$ is nondegenerate, self-orthogonality together with this half-dimension condition implies self-duality.
			\end{proof}
		\end{theorem}
		\begin{corollary}
			Suppose $s\geq 2$ and the existence of the differential $\eta=(t_{j}^{-s}+O(1))dt_{j}, 1\leq j\leq r$ satisfying $2G=(\eta)+E.$ If we further assume $v_{i,j}v_{s-i+1,j}=1,$ then $\mathcal{C}^{(s)}_{\mathcal{L}}(D,G,V)$ is reverse self-dual with regard to the reverse bilinear form $[\cdot,\cdot]_{\mathrm{rev}}$.
		\end{corollary}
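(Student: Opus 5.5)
The plan is to reduce the corollary to Theorem \ref{c3}(2) by showing that, for this normalized $\eta$ and multipliers with $v_{i,j}v_{s-i+1,j}=1$, the residue form $B^{(V)}_{\eta}$ on codewords of $\mathcal{C}^{(s)}_{\mathcal{L}}(D,G,V)$ coincides with the reverse bilinear form $[\cdot,\cdot]_{\mathrm{rev}}$ of Definition \ref{dd}. Self-duality with respect to $B^{(V)}_\eta$ then becomes reverse self-duality.

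First I compute the local Gram matrix $H_j$ for this $\eta$. Since $\eta=t_j^{-s}(1+O(t_j^{s}))dt_j$, we have $u_{j,0}=1$ and $u_{j,c}=0$ for $1\le c\le s-1$. Hence $H_j=(u_{j,s-1-a-b})_{a,b}$ has entry $1$ exactly when $a+b=s-1$, so $H_j=R$, the reversal matrix. Concretely,
\[
\mathrm{res}_{P_j}(ab\eta)=[t_j^{s-1}](ab)=\sum_{a'=0}^{s-1}a_{a'}b_{s-1-a'}
\]
for $a,b\in A_j$.

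Second, I incorporate the multipliers. With $X_j(f)=D_j\phi_j(f)$, the residue form evaluated on two codewords gives $\sum_j \phi_j(f)^T R\,\phi_j(g)$, since the $D_j^{-1}$ undo the multipliers. On the other hand, Lemma \ref{ll} with $\lambda_j=v_{a+1,j}v_{s-a,j}=1$ (the hypothesis $v_{i,j}v_{s-i+1,j}=1$ with $i=a+1$) gives $[\mathrm{ev}(f),\mathrm{ev}(g)]_{\mathrm{rev}}=\sum_j[t_j^{s-1}](fg)$. The two expressions agree on $\mathcal{C}\times\mathcal{C}$. More generally, for arbitrary $X,Y$ we have $[X,Y]_{\mathrm{rev}}=\sum_j X_j^TRY_j$ and $B^{(V)}_\eta(X,Y)=\sum_j X_j^TD_j^{-1}RD_j^{-1}Y_j$. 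The condition $v_{a+1,j}v_{s-a,j}=1$ says exactly $D_j^{-1}RD_j^{-1}=R$, because conjugating by $R$ reverses the diagonal. So the two forms are identical on all of $\mathbb{F}_q^{s\times r}$.

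Third, I apply Theorem \ref{c3}(2). Its hypothesis $(\eta)=2G-E$ is given, and $\mathrm{v}_{P_j}(\eta)=-s$ holds by the normalization. This yields self-duality with respect to $B^{(V)}_\eta=[\cdot,\cdot]_{\mathrm{rev}}$, i.e. $\mathcal{C}=\mathcal{C}^{\perp_{\mathrm{rev}}}$. The only delicate points are bookkeeping. One is the index convention in $D_j^{-1}RD_j^{-1}=R$, which needs the product condition $v_{a+1,j}v_{s-a,j}=1$ for all $a$, and this is symmetric, so it matches the hypothesis. The other is implicit in the support condition $\mathrm{Supp}(G)\cap\mathrm{Supp}(D)=\emptyset$: if $G$ has no support at $P_j$, then $\mathrm{v}_{P_j}((\eta))=-s$ is consistent with $2G-E$. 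I expect the identification $D_j^{-1}RD_j^{-1}=R$ to be the main (mild) obstacle; everything else is a direct invocation of earlier results.
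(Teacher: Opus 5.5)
Your proposal is correct and follows the route the paper intends: the corollary is stated without proof right after Theorem~\ref{c3}, as the specialization of Theorem~\ref{c3}(2) to the normalized differential. There $u_j\equiv 1 \pmod{t_j^s}$, so $\mathrm{res}_{P_j}(fg\eta)=[t_j^{s-1}](fg)$, and this is combined with Lemma~\ref{ll} at $\lambda_j=1$. Your observation that $D_j^{-1}RD_j^{-1}=R$ under $v_{a+1,j}v_{s-a,j}=1$ shows that $B^{(V)}_\eta$ and $[\cdot,\cdot]_{\mathrm{rev}}$ agree on all of $\mathbb{F}_q^{s\times r}$, which is a clean way to transfer the self-duality directly.
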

	
	\begin{remark}
		Denote by $K$ the canonical class of $F/\mathbb{F}_{q}$. By the equality $2G=(\eta)+E$, we have $2[G]=[K+E]$ in the Picard group $\mathrm{Pic}(F)$, which also means that the degree $\deg(K+E)=2\mathfrak{g}_{F}-2+sr$ is even, \textit{i. e.}, $sr$ is even. Besides, we also require $[K+E]$ is divisible by $2$ in $\mathrm{Pic}(F)$.
	\end{remark}

	\section{Function Field towers and asymptotic NRT parameters}
	\label{sec:4}
	Fix a positive integer $s\in\mathbb{N}$. Let $\mathcal{F}=(F_{i}/\mathbb{F}_{q})_{i\geq1}$ be a function field tower with genus $\mathfrak{g}_{F_{i}}\to \infty$ and 
	\begin{displaymath}
		\lim_{i\to\infty}\mathfrak{g}_{i}=\infty\ \text{and}\ \lim_{i\to \infty}\frac{N_{i}}{\mathfrak{g}_{F_{i}}}=\lambda(\mathcal{F}).
	\end{displaymath}
    Note that $\lambda(\mathcal{F})\leq A(q)$. In particular, we have $\lambda(F)=A(q)$ if $\mathcal{F}$ is optimal.
     In the following construction,
	we choose $Q_{i}\in\mathbb{P}_{F_{i}}$ as a unique rational pole and label $r_{i}=N_{i}-1$ rational places as $D_{i}$. Let $G_{i}=m_{i}Q_{i}$, $E_{i}=sD_{i}$ and $n_{i}=sr_{i}$.
	\begin{theorem}\label{c4}
		Let $\mathcal{F}=(F_{i}/\mathbb{F}_{q})_{i\geq 1}$ be a function field tower with
        \begin{displaymath}
            \mathfrak{g}_{i}\to\infty,\ \frac{N_{i}}{\mathfrak{g}_{i}}\to \lambda(\mathcal{F}).
        \end{displaymath}
        Then we can construct a family of $q-$ary HAG codes whose  information rate $R$ and relative distance $\delta$ satisfy
		\begin{displaymath}
			R\geq 1-\frac{1}{s\lambda(\mathcal{F})}-\delta.
		\end{displaymath}
		\begin{proof}
From Theorem \ref{tt}, we can construct a family of $q$-ary HAG codes with parameters $[n_{i},k_{i},d_{i}]_{q}$ satisfying
\begin{displaymath}
   k_{i}=\ell(G_{i})\geq m_{i}+1-\mathfrak{g}_{i}\ \text{and}\ d_{i}\geq n_{i}-k_{i}+1-\mathfrak{g}_{i} .
\end{displaymath}

 Then it follows that
\begin{displaymath}
\frac{d_{i}}{n_{i}}\geq 1-\frac{k_{i}}{n_{i}}+\frac{1-\mathfrak{g}_{i}}{n_{i}}.
\end{displaymath}
Without loss of generality, we assume the existence of the two limits:
\begin{displaymath}
    R=\lim_{i\to\infty}\frac{k_{i}}{n_{i}}\ \text{and}\ \delta=\lim_{i\to\infty}\frac{d_{i}}{n_{i}}.
\end{displaymath}
Then we have 
\begin{displaymath}
\begin{split}
\delta+R  &\geq 1+\lim_{i\to\infty}\frac{1-\mathfrak{g}_{i}}{n_{i}}\\
    &=1-\frac{1}{s\lambda(\mathcal{F})}.
    \end{split}
\end{displaymath}
		\end{proof}
	\end{theorem}
	
	\begin{corollary}
		\begin{enumerate}
			\item 	If $q$ is a square, then we can construct a family of $q-$ary HAG codes whose information rate $R$ and relative distance $\delta$ satisfy
			\begin{displaymath}
				R\geq 1-\frac{1}{s(\sqrt{q}-1)}-\delta.
			\end{displaymath}
			\item If $q=p^{2m+1}$ and $\lambda(\mathcal{F})=A(q)$, then we can construct a family of $q-$ary HAG codes whose information rate $R$ and relative distance $\delta$ satisfy
			\begin{displaymath}
				R\geq 1-\frac{1}{2s}\left(\frac{1}{p^{m}-1}+\frac{1}{p^{m+1}-1}\right)-\delta.
			\end{displaymath}
			\item If $\ell$ is a prime, $\lambda(\mathcal{F})=A(q)$, $\ell\mid (p-1)$ and $p>4\ell+1$, then we can construct a family of $q=p^{\ell}-$ary HAG codes whose information rate $R$ and relative distance $\delta$ satisfy
			\begin{displaymath}
				R\geq 1-\frac{\ell-1}{s\left(\sqrt{\ell(p-1)}-2\ell\right)}-\delta.
			\end{displaymath}
		\end{enumerate}
		\begin{proof}
			If $q$ is a square, then we can choose the Garcia-Stichtenoth tower such that $\lambda(\mathcal{F})=A(q)=\sqrt{q}-1$. Then we have 
            \begin{displaymath}
                R\geq 1-\frac{1}{sA(q)}-\delta=1-\frac{1}{s(\sqrt{q}-1)}-\delta.
            \end{displaymath}
            If $q=p^{2m+1}$, then we have 
			\begin{displaymath}
				A(p^{2m+1})\geq \frac{2(p^{m+1}-1)}{p+1+ \frac{p-1}{p^{m}-1}  }=\frac{2}{\frac{1}{p^{m}-1}+\frac{1}{p^{m+1}-1}}.
			\end{displaymath}
			and
              \begin{displaymath}
                R\geq 1-\frac{1}{sA(p^{2m+1})}-\delta\geq 1-\frac{1}{2s}\left(\frac{1}{p^{m}-1}+\frac{1}{p^{m+1}-1}\right)-\delta.
            \end{displaymath}
            If $q=p^\ell$, $\ell\mid(p-1)$, and $p>4\ell+1$, then we have $$  A(p^{\ell})\geq\frac{\sqrt{\ell(p-1)}-2\ell}{\ell-1}$$ and 
             \begin{displaymath}
                R\geq 1-\frac{1}{sA(p^{\ell})}-\delta\geq 1-\frac{\ell-1}{s\left(\sqrt{\ell(p-1)}-2\ell\right)}-\delta.
            \end{displaymath}
		\end{proof}
	\end{corollary}
	\section{Conclusion and possible future work}
	\label{sec:5}
	In this paper, we provide two constructions of HAG codes, \textit{i. e.}, evaluation and differential HAG code via local expansions. Our constructions extend HRS codes to the general function fields.  We also completely determine the duality via residue theorem. In particular, the equivalence relation and reverse duality in \cite{99} can also be illustrated explicitly via local expansions. Within the framework of function fields, we can also simplify the prior construction of HAG codes by carefully choosing a special type of differentials. We also determine self-orthogonal and self-dual conditions via two types of bilinear forms. Finally, we also obtained some asymptotic results via some function field towers and Ihara's constant. As for possible future work, it is interesting to explore further novel constructions of AG codes.

\end{document}